\newif\ifextended
\newtheorem{theorem}{Theorem}
\newtheorem{lemma}{Lemma}
\def\hrulei{\hrule height0pt\relax}
\newcommand{\inextended}[2]{\ifextended {#1} \else {#2} \fi}
\newcommand{\minusS}[0]{\ensuremath{\mathit{MinusS}}\xspace}
\newcommand{\minusLang}[0]{\ensuremath{\mathit{MinusLang}}\xspace}
\newcommand{\minusRelPrime}[2]{\ensuremath{\leftindex_{#1}{\overset{#2}{\rightsquigarrow}}^{\prime}}\xspace}
\newcommand{\ruleKeyword}{\textbf{rule}\ \xspace}
\begin{document}

\title{Minuska: Towards a Formally Verified Programming Language Framework}

\author{Jan Tušil\orcidlink{0000-0002-7264-2569} \\\href{mailto:jan.tusil@mail.muni.cz}{jan.tusil@mail.muni.cz} \and
Jan Obdržálek\orcidlink{0000-0002-6655-7798} \\\href{mailto:obdrzalek@fi.muni.cz}{obdrzalek@fi.muni.cz}}

\begin{titlingpage}
\maketitle

\begin{abstract}
  Programming language frameworks allow us to generate language tools
  (e.g., interpreters) just from a formal description of the syntax and
  semantics of a programming language. As these frameworks
  tend to be quite complex, an issue arises whether we can trust the generated
  tools.  To address this issue, we introduce a practical formal programming
  language framework called Minuska, which always generates a provably correct
  interpreter given a valid language definition.
  This is achieved by
  (1) defining a language MinusLang for expressing programming language
  definitions and giving it formal semantics and (2) using the Coq proof
  assistant to implement an interpreter parametric in a MinusLang definition
  and to prove it correct.  Minuska provides strong correctness guarantees and
  can support non-trivial languages while performing well.
  This is the extended version of the SEFM'24 paper of the same name.
\end{abstract}

\end{titlingpage}

\newcommand{\K}{\mbox{$\mathbb{K}$}\xspace}

\newcommand{\ghrepo}[0]{https://github.com/h0nzZik/minuska}
\newcommand{\ghtag}[0]{v0.3.0}
\newcommand{\ghref}[2]{\href{\ghrepo/blob/\ghtag/#1}{#2}}

\section{Introduction}\label{sec:intro}
To rigorously reason about software that controls computer systems -- e.g., to
guarantee their security, reliability, or safety -- one must be able to assign
precise meaning to programs. This can be rigorously done by providing
\emph{formal semantics} to programming languages. It has been argued,
e.g., in~\cite{KleinCDEFFMRTF12}, that encoding a language's
semantics \emph{in a mechanized formal language} can yield both theoretical
and practical benefits.  Using a practical \emph{semantic framework} such as
PLT-Redex~\cite{KleinCDEFFMRTF12} or \K~\cite{RosuS10}, one can take a formal
description of a programming language and derive from it a parser, an
interpreter~\cite{KleinCDEFFMRTF12,HillsSR07}, or a deductive verifier~\cite{StefanescuPYLR16}.

In another line of research, recent years have witnessed the emergence of
formally verified programming language tools.  For example,
CompCert~\cite{BoldoJLM13} is a compiler for the C programming language,
formally verified in Coq, and CakeML~\cite{KumarMNO14} is a compiler for a
variant of ML, verified using HOL4.

In~\cite{ChenLTR20}, the authors have presented, within the auspices of the
\K semantics framework, their vision of combining the
best of the two worlds: A semantics framework which, given some formal syntax and
semantics of a programming language,
can produce the tools above together with guarantees of their correctness.
These guarantees are mathematical proofs, which can be mechanically
machine-checked by a trustworthy proof checker.

The paper~\cite{ChenLTR20} also contains the first step to achieving this
vision.
The authors formalize program execution as mathematical proof
and create a tool that generates a \emph{certifying interpreter} (given a formal description of a programming language).
This interpreter is given a program and its
input and produces the result of running the program
on the input and a \emph{certificate} in the form of a complete execution trace,
more precisely, its associated proof object.
A simple Metamath~\cite{metamath}-based proof checker can subsequently check this proof object.
This way, one can
detect if the interpreter does not behave correctly (the execution trace does
not conform to the formal semantics of the language) on a given program
input.

In this work, we advance further.
Instead of having an interpreter that
would provide us with machine-checkable certificate (proof) that the execution was correct, we
want to produce an interpreter which is \emph{provably correct}. This way we
not only get stronger guarantees, but also do not have to spend
computational resources on checking the generated certificates.  Thus, there
are at least two motivations for having such a verified semantics framework:
\emph{correctness}, and \emph{performance}.

\subsection{Our approach}

In our work, we have built a new semantics framework, called
Minuska, instead of working within the \K framework. There are
several reasons for this. First, while being
human-readable, the \K input language
is not suitable for formal verification since it does
not have formal semantics. Therefore, \cite{ChenLTR20}~works with the
intermediate Kore language,
to which the \K framework translates its input language.
While Kore is much simpler to handle, it does not have formal
semantics either -- tools from the \K framework handle its various
features (such as subsorting and attributes) in an ad-hoc way.
In~\cite{ChenLTR20}, the authors, therefore, target only its strict subset,
which is then given formal semantics by translating it into a logic named
``matching logic''~\cite{MatchingMuLogic}. While this fragment of Kore is
expressive enough to encode simple rewriting systems like the one used as a
running example in~\cite{ChenLTR20} or the ones included therein for
performance evaluation, it
is not expressive enough to fully handle even the simple IMP programming
language definition, which serves as a motivating example of the paper:
according to~\cite{ChenLTR20}, their system supports neither the use of
built-in collection datatypes nor evaluation order attributes, both of which
IMP relies on.

A second reason we have decided to go with creating a new framework (and a new
input language) is that the fragment of Kore supported in~\cite{ChenLTR20}
allows for language definitions that are internally inconsistent -- due to the
presence of function symbols. Indeed, one can have a user-defined function
symbol $s$ and axiomatize it to be equal to the integer value $3$ and the
integer value $4$ simultaneously, and the \K framework will
happily accept such a definition. With such a contradiction, we could generate
``proofs'' which do not guarantee anything.

We designed the input language of Minuska to avoid the issues above. It
has a simple but formal semantics, and its interpreter is written in the Coq
proof assistant~\cite{Coq} and formally proven correct.  
The input language of Minuska is
strong enough to express the motivating example of~\cite{ChenLTR20}, which the
\K-based certifying interpreter of~\cite{ChenLTR20} could not handle, but our verified
interpreter can.
Moreover, our formally verified interpreter outperforms the certifying interpreter of~\cite{ChenLTR20},
which was said to show \blockquote{promising performance},
by a factor of at least 20.

\subsection{Contributions}

Our main technical contributions are:
\begin{enumerate}
    \item \emph{MinusLang} -- a language embedded in Coq, with formal semantics, for specifying operational semantics
    of programming languages, with no room to introduce logical inconsistency to the language definition;
    \item an \emph{interpreter} for any programming language that has a well-formed MinusLang semantics;
    \item a mechanized \emph{proof of the soundness and completeness} of that interpreter.
\end{enumerate}

On the conceptual level, the main takeaways of our paper are the following:
\begin{enumerate}
    \item One can have a programming language framework that is practical, formally verified, and inexpensive to develop.
    (It took about 4 person-months to develop the framework with the verified interpreter).
    \item The use of foundational formal verification can result in a significantly more performant tool
    than retrofitting after-the-task certificate generation and checking to a professionally developed but unverified tool,
    even without clever optimizations on the side of the verified tool.
\end{enumerate}

The rest of the paper is organized as follows: We start by giving a short
overview of semantic frameworks in~Section~\ref{sec:background}. The following
presents Minuska -- in Section~\ref{sec:trs} we present the term rewriting
system on which MinusLang is based and Section~\ref{sec:builtins} is devoted to
the treatment of built-in types (which are one of the important aspects
of Minuska). The universal interpreter is described in
Section~\ref{sec:interpreter} and finally Section~\ref{sec:trust} discusses
the usage of Minuska and its inherent trust base. We follow by examing the
performance of Minuska, both with repect to~\cite{ChenLTR20} and on its own, in
Section~\ref{sec:evaluation}, which also discusses Minuska's limitations and
future work. {\textbf{This is the extended version, with appendix, of the SEFM'24 paper of the same name.}} The formal syntax and semantics of MinusLang, as well as the full
definition of semantics for a simple imperative language, can be found in the \inextended{appendix.}{extended version~\cite{MinuskaExt} of this paper.}
Minuska is available at:
\\
\begin{centering}
\href{\ghrepo/tree/\ghtag}{\ghrepo/tree/\ghtag}
\end{centering}
and on Zenodo~\cite{MinuskaArtifact}.


\section{Semantic frameworks and their correctness}\label{sec:background}

A \emph{programming language semantic framework} is a software tool in which
formal semantics of various programming languages can be encoded and that
derives other artefacts from the semantics.  Among the most widely known
semantic frameworks, there is Ott~\cite{SewellNOPRSS10} that generates
parsers, \LaTeX{} code, and theorem prover (Coq, HOL, and Isabelle/HOL)
definitions from a definition of a language's syntax and semantics;
\K~\cite{ChenR18} that generates parsers, interpreters, and deductive
verifiers; and PLT-Redex~\cite{KleinCDEFFMRTF12} that generates parsers,
interpreters, and \LaTeX{} code.  To the best of our knowledge, none of these
tools has been formally verified. As mentioned earlier, Minuska is
directly inspired by \K~framework.

\Cref{fig:impfragment}, which serves as our running example, shows how a
simplified fragment of formal definition of language semantics for a simple imperative
language can look like. (The syntax is explained in \Cref{sec:sugar}, and the full definition of the IMP language semantics
can be found \inextended{in the appendix.)}{in~\cite{MinuskaExt})}

\begin{figure}[t]
\begin{verbatim}
@frames: [simple(X): c[X, STATE]];
@value(X):
    (bool.or(z.is(X), bool.or(bool.is(X),
     term.same_symbol(X, [unitValue[]])))) ;
@strictness:
    [plus of_arity 2 in [0,1], ite of_arity 3 in [0]];
@rule/simple [aexpr.plus]:
    plus[X,Y] => z.plus(X, Y) where bool.and(z.is(X), z.is(Y)) ;
@rule/simple [stmt.ite.true]:
    ite[B, X, Y] => X where bool.eq(B, bool.true()) ;
@rule/simple [stmt.ite.false]:
    ite[B, X, Y] => Y where bool.eq(B, bool.false()) ;
@rule/simple [while.unfold]:
    while[B, S] => ite[B, seq[S, while[B, S]], unitValue[]]
    where bool.true() ;
\end{verbatim}
    \caption{A fragment of the formal semantics of a simple imperative language, written in the sugared concrete syntax of Minuska.}
    \label{fig:impfragment}
\end{figure}

\subsection{Interpreters as testing tools}

In the context of semantic frameworks, interpreters enable language designers
to test that a formally defined language semantics behaves as intended. One
has just to generate an interpreter and test, whether its exhibited behaviour
corresponds to designers' intentions. This has already been observed in the
literature around the \K framework, e.g. in~\cite{Dasgupta0KAR19}.
However, for this testing to be effective, the generated interpreter needs to
\emph{faithfully capture} the language semantics\footnote{Automated testing of a verified compiler for a fixed language
is discussed, e.g., in~\cite{MonniauxGBL23}.}.  Thus, an automatically generated
interpreter formally verified with respect to the semantics would also
increase the trustworthiness of the whole framework, including the languages
modelled in it.

\section{Minuska}\label{sec:architecture}
This section describes Minuska --- a formally verified semantic framework
that generates interpreters 
from formal programming language definitions.  The framework
is implemented in the Coq proof assistant and consists of three main parts:

\begin{enumerate}
\item the language \minusS for static reasoning about program configurations,
\item the language \minusLang for describing computational steps, and
\item a formally verified generator of interpreters.
\end{enumerate}


        

          

While Minuska is directly inspired by the \K~framework and uses similar
concepts, it is based on a different foundational theory -- \emph{first-order term
rewriting over built-in types}. This design decision has been crucial for us
to be able to formally verify Minuska.

The \K~framework uses a complex logic (``matching logic'') with
quantifiers and other connectives to formally capture reasoning about program
configurations and their relations with semantic rules.  If one wanted to
formally verify \K's reasoning in a proof assistant such as Coq, one
would have to formalize not only the matching logic itself but also many
important idioms that \K uses on top of matching logic.  Similarly, one would
need to define a matching logic theory of built-in data types implemented in the \K
framework, such as integers, lists, and maps.  Neither is considered in the
existing work on the mechanization of matching logic in proof assistants, such as
\cite{TusilBH23}.  We also note that the built-in data types of \K
are not implemented in the Metamath-based formalization of matching logic in
\cite{ChenLTR20}.

Since Minuska is based on first-order term rewriting \emph{over built-in
  types}, we do not have to axiomatize the behavior of built-in types -- we
borrow them directly from Coq. The result is much simpler than what would be
needed to formalize \K, and avoids the abovementioned soundness holes.

We do not formally describe the syntax and semantics of MinusLang here and
refer the reader \inextended{to \Cref{sec:formaldefs}.}{to~\cite{MinuskaExt}.} Instead, in the following section,
we informally describe the term rewriting system of Minuska.

\subsection{Term rewriting system}\label{sec:trs}

\paragraph{Configurations}
In Minuska, we represent program configurations as \emph{ground terms} -
first-order terms without variables but possibly with special constants
representing (potentially infinitely many) \emph{built-in values}. These
values inhabit one of the \emph{built-in types}.  For example, a configuration
of a simple imperative program may look like
\begin{verbatim}
c[ (y := x + y; z := y - 1;), (x |-> 3, y |-> 4) ]
\end{verbatim}
where \texttt{c} is a symbol, \texttt{y := x + y; z := y - 1;} is the program to be executed
and \texttt{x |-> 3, y |-> 4} is a built-in value of a dictionary type,
(representing the execution environment) that maps the program variable
\texttt{x} to value $3$ and the program variable \texttt{y} to value
$4$. (Both $3$ and $4$ are values of the built-in integer type.)



\paragraph{Rewriting rules}
At runtime, programs evolve by applying rewriting rules of the form
\texttt{$\alpha$~=>~$\beta$} to configurations. (A set $\Gamma$ of rewriting
rules is called a \emph{rewriting theory}. Such a theory therefore encodes
semantics of a given programming language.)  Both $\alpha$ and $\beta$ are
terms with variables that can \emph{match} program configurations in a given
\emph{valuation} (mapping variables to ground terms). In Minuska, rewriting
rules are conditional, and the \emph{right-hand sides} and side conditions can
use operations defined for our built-in types. (One can think of the built-in
types with their associated operations as forming the \emph{static model} of
Minuska.)

For example, take the following rule \texttt{aexpr.plus}, for binary addition, from the IMP definition in~\Cref{fig:impfragment}
\begin{verbatim}
plus[X,Y] => z.plus(X, Y) where bool.and(z.is(X), z.is(Y))
\end{verbatim}
Its left-hand side matches any term where the ``plus'' symbol is the top level
operator, applied to two operands; the side condition introduced by the
\texttt{where} keyword states that both operands have to be integers; and the
right-hand side adds the two operands using the built-in function for integer
addition.

\subsection{Sugared syntax for MinusLang}
\label{sec:sugar}

However, the rule above is not directly applicable to configurations like the
one in the previous section, which contains both a sequence of statements and
an environment. Even for
simpler configurations of the form \texttt{c[\textit{expression},
  \emph{state}]}  we want to pattern-match only on the
\texttt{\textit{expression}} part of the configuration. This is achieved by
the ``\texttt{@rule/simple [aexpr.plus]:}'' declaration, which states that the
\texttt{aexpr.plus} rule uses the \texttt{simple(CODE): c[CODE, STATE]}
\emph{frame}. In such a frame, only the part corresponding to the
meta-variable \texttt{CODE} is used for pattern matching and subsequent
evaluation (cf. the \texttt{@frames:} declaration in the first line of~\Cref{fig:impfragment}).
Additionally, \texttt{@strictness} declarations like
\begin{verbatim}
    plus of_arity 2 in [0,1]
\end{verbatim}
of \Cref{fig:impfragment}
are used to generate rules that ensure that both arguments of \texttt{plus}
are evaluated to values. A term is considered a value
according to the predicate defined by the keyword \texttt{@value}.
Finally, \texttt{@context} plays a role simillar to that of \texttt{@frames} for
the strictness declarations.

This sugared syntax makes the MinusLang definitions more concise and
readable. For example, using this syntax, a complete definition of IMP programming
language semantics takes less than one page --
\inextended{see \Cref{sec:imp-in-minuska}.}{see~\cite{MinuskaExt}.} The translation from the sugared syntax to pure
MinusLang is defined in \ghref{minuska/theories/frontend.v}{frontend.v}
and \ghref{minuska/theories/default\_everything.v}{default\_everything.v}.

\subsection{Static model}\label{sec:builtins}

In the \K framework, the user can describe
the desired model by using axioms in matching logic,
which can lead to inconsistencies or underspecification,
with no way to detect these.
In Minuska, one has to provide the model directly, which avoids these problems.
The question is then how to describe such a model.
Since Minuska is built inside Coq, one can use the full power of
calculus of inductive construction - the logic of Coq that has been proven
to be sound with respect to set theory.
While it is true that Coq allows its users to assume some axioms without
proving them, this can be easily detected by Coq-provided tools
(or a simple \texttt{grep}).
Moreover, such axioms would not live inside a language definition
but in the metalevel, so a semantic engineer working inside Minuska
has no way of accidentally assuming them.

\paragraph{Default Static Model}
Minuska comes with a default static model.
In the model, built-in values are defined
as if by the following snippet:
\begin{verbatim}
Inductive BuiltInValue :=
| bv_error
| bv_bool (b : bool)
| bv_Z (z : Z)
| bv_sym (s : symbol)
| bv_str (s : string)
| bv_list (m : list GroundTerm)
| bv_finite_dict (m : finite_dict GroundTerm GroundTerm).
\end{verbatim}
That is, a built-in value is either an error value,
a boolean,
an integer, a symbol
(that is used to build terms),
a string, a list of ground terms,
or a finite dictionary from ground terms to ground terms.
Technically, to satisfy Coq's positivity checker,
the last case is implemented as a finite dictionary
from positive natural numbers to ground terms
(we used the \verb|pmap| type that implements
an extensional trie of~\cite{AppelL23} in the Coq stdpp library),
together with the countability property of ground terms
and built-in values.

The provided built-in functions are standard;
these include some arithmetic operations on integers (based on the \texttt{ZArith} module of
the Coq standard library),
boolean operations (based on the \texttt{Bool} module of the Coq standard library),
and operations for manipulating dictionaries (based on the \texttt{pmap} module of stdpp).

\subsection{Interpreter}\label{sec:interpreter}

In the context of Minuska, we consider a \emph{one-step interpreter} to be a partial function
that takes a current configuration (described by a ground term) and performs
one computational step within the current rewriting theory (if possible),
returning the next configuration. As we metioned previously, our
interpreter is implemented wholly within Coq.

Instead of defining a new interpreter for each language, in Minuska we have
defined a simple universal interpreter \textsc{step}. This interpreter is
\emph{universal} in that it takes as an input not only an input
configuration but also a rewriting theory $\Gamma$ (encoding the semantics of
a programming language).
The interpreter works by simply pattern-matching the left-hand side of a rule,
evaluating the side condition in the resulting valuation, and interpreting
the right-hand side using that valuation.  The process of rule selection
is currently somewhat \emph{naive}, based on linear search in the list of rules
representing the rewriting theory $\Gamma$.

\begin{algorithm}
    \caption{Interpreter Step}\label{alg:step}
\begin{algorithmic}

    \Function{naiveSelect}{rewriting theory $\Gamma$, ground term $g$}
        \For{$(l \Rightarrow r\ \texttt{if}\ c) \in \Gamma$}
            \If{\Call{tryMatch}{$l$, $g$} is $\rho \in \mathit{Val}$}
                \If{\Call{evaluteCondition}{$\rho$, $c$}}
                    \State \Return $((l \Rightarrow r\ \texttt{if}\ c), \rho)$
                \EndIf
            \EndIf
        \EndFor
        \State \Return None
    \EndFunction
    \State
    \Function{step}{rewriting theory $\Gamma$, ground term $g$}
        \If{\Call{naiveSelect}{$\Gamma$, $g$} is $((l \Rightarrow r\ \texttt{if}\ c), \rho)$} 
            \State \Return \Call{evaluate}{$\rho$, $r$}
        \Else
            \State \Return None
        \EndIf 
    \EndFunction
\end{algorithmic}
\end{algorithm}

The significant benefit of Minuska is that we can prove that our interpreters are always correct in the
following sense: We say that an interpreter is \emph{sound} with respect to a
rewriting theory iff for every input ground term, if the interpreter returns a
ground term, then the two are related by the semantics of the rewriting
theory.  Conversely, we say that an interpreter is \emph{complete} iff for
every input ground term, the interpreter returns some ground term unless the
input ground term is stuck - i.e., has no successor in the semantics.  Observe
that with such a definition, an interpreter is allowed to resolve
non-determinism arbitrarily.

\begin{theorem}\label{thm:stepCorrect}
    Function \textsc{step} (when applied to a well-formed theory $\Gamma$) is a~sound and complete interpreter (with respect to $\Gamma$).
\end{theorem}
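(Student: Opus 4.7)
The plan is to split Theorem~\ref{thm:stepCorrect} into its soundness and completeness components and reduce each to correctness properties of the three primitive operations used by \textsc{step}: \textsc{tryMatch}, \textsc{evaluateCondition}, and \textsc{evaluate}. Throughout, ``well-formed'' should be read as requiring (at least) that every variable appearing in a rule's right-hand side or side condition is bound by the rule's left-hand side; this hypothesis will only be used for completeness.

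For soundness, assume \textsc{step}$(\Gamma, g) = g'$. Unfolding the algorithm, there is a rule $(l \Rightarrow r\ \texttt{if}\ c) \in \Gamma$ and a valuation $\rho$ such that \textsc{tryMatch}$(l, g) = \rho$, \textsc{evaluateCondition}$(\rho, c)$ returns true, and \textsc{evaluate}$(\rho, r) = g'$. The heart of the argument is three auxiliary lemmas, each proved by structural induction on the pattern/term: (i) if \textsc{tryMatch}$(l, g) = \rho$ then $\rho$ is a witnessing valuation for $l$ matching $g$; (ii) if \textsc{evaluateCondition}$(\rho, c)$ is true then $c$ is semantically satisfied under $\rho$; (iii) if \textsc{evaluate}$(\rho, r) = g'$ then $g'$ is the semantic image of $r$ under $\rho$. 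Combining these with the definition of the single-step rewriting relation $\rightsquigarrow_\Gamma$ immediately yields $g \rightsquigarrow_\Gamma g'$.

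For completeness, suppose $g$ is not stuck, so some rule $(l \Rightarrow r\ \texttt{if}\ c) \in \Gamma$ and valuation $\sigma$ witness a step $g \rightsquigarrow_\Gamma g''$. We must show that \textsc{step}$(\Gamma, g)$ returns some ground term, though possibly obtained via a different rule or a different valuation; the definition of completeness allows such arbitrary resolution of non-determinism. This requires the ``dual'' versions of the earlier lemmas: (i$^\ast$) if any valuation matches $l$ against $g$, then \textsc{tryMatch}$(l, g)$ succeeds with some $\rho'$, and $\rho'$ agrees with every witnessing valuation on the variables of $l$ (so, in particular, on the variables occurring in $c$ and $r$, by well-formedness); (ii$^\ast$) if the side condition holds under some matching valuation, then \textsc{evaluateCondition}$(\rho', c)$ returns true; and (iii$^\ast$) if the free variables of $r$ are all in the domain of $\rho'$, then \textsc{evaluate}$(\rho', r)$ succeeds. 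Since \textsc{naiveSelect} linearly scans $\Gamma$ and may commit to an earlier rule than the one witnessing the semantic step, we must argue that \emph{whichever} rule it commits to still passes both the match and the side-condition check --- which is exactly what guards the return of the rule in the loop --- and then (iii$^\ast$) guarantees that \textsc{evaluate} on the chosen rule's right-hand side does not fail.

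The main obstacle will be formulating the matching lemmas (i) and (i$^\ast$) precisely and establishing the right notion of well-formedness. In particular, one needs \textsc{tryMatch} to be both sound (never inventing spurious matches) and, on matchable inputs, total and essentially canonical, so that its output is usable by the subsequent evaluation of $c$ and $r$. Once well-formedness is leveraged to show that every variable \textsc{evaluate} and \textsc{evaluateCondition} consult is bound by $\rho'$, the top-level assembly of the two directions is largely mechanical bookkeeping on top of the inductive lemmas about matching, condition evaluation, and term evaluation.
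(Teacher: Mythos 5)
Your proposal matches the paper's own proof essentially step for step: soundness and completeness are each reduced to correctness lemmas for \textsc{tryMatch}, \textsc{evaluateCondition}, and \textsc{evaluate}, with the same asymmetry between the soundness and completeness statements of \textsc{tryMatch} (a minimal valuation, defined exactly on the variables of $l$ and extended by the semantic witness) and the same appeal to well-formedness ($\mathit{FV}(r)\cup\mathit{FV}(\mathit{cs})\subseteq\mathit{FV}(l)$) to push the matched valuation through the condition check and right-hand-side evaluation. The only difference is that you spell out more explicitly than the paper that \textsc{naiveSelect} may commit to an earlier rule than the semantic witness and that any rule it returns has already passed both guards; this is a refinement of the paper's argument, not a divergence from it.
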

The theorem follows directly from the correctness of its components.
The function \textsc{tryMatch} returns a valuation in which the rule's
left-hand side is satisfied the current program configuration (if there is some);
the function \textsc{evaluateCondition} checks whether the rule's side condition is satisfied in the given valuation;
and the function \textsc{evaluate} returns a concrete program configuration
that satisfies the rule's right-hand side in the given valuation.
A more detailed proof scatch is \inextended{in \Cref{appendix:intepreterCorrectness}}{in~\cite{MinuskaExt}}.

The implementation of the universal interpreter is around 150LoC long, and the
proof of correctness takes about 3kLoC (both excluding supporting
infrastructure).
What made our proofs simpler is that in the implementation,
we abstracted away the details about built-in values:
the interpreter only assumes that built-in values are equipped
with a decidable equality.
The decision procedure for the supported built-in values
is non-trivial ($\sim$1500LoC);
however, because of the abstraction, its details do not complicate
the proof of the interpreter's correctness.


\subsection{Usage and trust base}\label{sec:trust}
The size of trust base of Minuska depends on the way Minuska is used.
With Minuska, one can
\begin{enumerate}
\item put the language definition into a \texttt{*.m}
  file and invoke the \texttt{minuska compile} command
  to check the language definition for well-formedness and generate an
  executable interpreter; or
\item put the language definition into a \texttt{*.m}
  file and use the \texttt{minuska def2coq} command to generate a Coq (\texttt{*.v}) file containing the language definition and
the associated interpreter, then load the generated file into Coq and use it from within Coq; or
\item write the language definition directly in Coq.
\end{enumerate}
In the first situation one needs to trust the hand-written OCaml code that implements
the \texttt{minuska} executable, as well as Coq's extraction mechanism and OCaml compiler;
in the second case, the trust base consists of the \texttt{minuska} executable
and Coq's kernel; in the last case, only the kernel of Coq needs to be trusted.
In principle, the trust base of the first case could be reduced by using
a formally verified extraction mechanism (to OCaml~\cite{CoqOcamlVerifiedExtraction} or C~\cite{Anand2016CertiCoqA})
and a formally verified compiler
(such as CompCert~\cite{BoldoJLM13}).

\subsection{Concrete syntax and parsing}
While Minuska internally operates on the abstract syntax of the given program,
and also the semantic rules of a programming language are written with respect to the language's abstract syntax,
Minuska provides an interface to connect a user-provided parser for the target language.



\section{Evaluation}\label{sec:evaluation}

We evaluate Minuska from two perspectives,
\emph{expressivity} and \emph{performance},
with respect to the \K-framework implementation of \cite{ChenLTR20},
as well as on its own.
For performance evaluation with respect to \cite{ChenLTR20},
we have to choose benchmarks supported by both Minuska and \K.
In \cite{ChenLTR20}, the authors choose two sets of benchmarks:
a rewriting system (named ``two-counters'') implementing summation from $1$ to $n$ of unary-encoded natural numbers,
executed with various inputs, and a selection of benchmarks from the
rewriting engines competition (REC \cite{RECPaper}), with reduced inputs, adapted to \K.
However, the adapted REC benchmarks use
user-defined function symbols, which (as discussed in \Cref{sec:intro})
allows for a logical inconsistency in a \K-based language definition.
Moreover, the REC benchmarks were designed for benchmarking rewriting
engines, and not semantic frameworks,
and thus exercise neither the idioms used in developments of language definitions
using \K (such as \emph{strictness} for specifying evaluation order) nor the supporting infrastructure
for these as implemented in both \K (\emph{strictness attributes})
and Minuska (\emph{strictness declaration}).

\begin{figure}[h!]
\begin{verbatim}
@rule [step]: 
state[M, N] => state[z.minus(M, [(@builtin-int 1)]), z.plus(N, M)]
where z.lt([(@builtin-int 0)],M);
\end{verbatim}
        
\begin{minipage}[t]{.45\textwidth}
\hrulei
\begin{tabular}{r|c}
    count-to & Native compute time [s] \\ \hline
       1'000 & 0.01 \\
      10'000 & 0.06 \\
     100'000 & 0.45 \\
   1'000'000 & 4.39 \\
\end{tabular}    
\end{minipage}\hspace{0.02\linewidth}
\begin{minipage}[t]{.45\textwidth}
\hrulei
\begin{tabular}{r|cc}
count-to & Compute time [s] & coqc time [s] \\ \hline
   10  & 0.028 & 0.75 \\
   100  & 0.044 & 0.78 \\
 1'000  & 0.213 & 0.92 \\
10'000  & 1.871 & 2.58
\end{tabular}
\end{minipage} %
\caption{``two-counters'': the running example of \cite{ChenLTR20}
in Minuska. ``Native compute time'' means wall-clock time of the system executed in an extracted and compiled standalone interpreter.
``Compute time'' is the time needed to evaluate the benchmark using Coq's \texttt{Compute} command,
and ``coqc time'' is time needed to evaluate a Coq (\texttt{*.v}) file containing the command.}
\label{fig:twocounters}
\end{figure}

On the other hand, the ``two-counters'' benchmark is too simple for our purposes
(\Cref{fig:twocounters}).  Therefore, to compare Minuska with respect to \K,
we crafted two new benchmarks that are simple enough to be handled by \K and
simultaneously idiomatic enough in the sense that they use strictness
idioms. As mentioned in \cite{ChenLTR20}, their tool does not support the
native strictness attribute of \K. Therefore, in the \K  version of the benchmarks, we
emulated this feature manually in the same style as done in the version of the
benchmarks for Minuska.  Since \K does not support generating proofs for (its)
builtin integers, we also restrain from using builtin values, with the
exception of builtin booleans which are used in the side conditions of rules
both in \K and in Minuska.  The two new benchmarks are: computing the $n$-th
member of the Fibonacci sequence, and computing the factorial of $n$, both
done in the iterative manner over unary-encoded natural numbers.

\subsection{Performance of Minuska vs \K}

\begin{figure}
    \begin{tabular}{c|ccc|cc}
    name & \K parameters [s] & \K cert-gen [s] & \K cert-check [s] & M compute [s] & M coqc [s] \\ \hline
    tc 10    & 38.01          & 1.40      & 2.71         & 0.03 & 0.75    \\
    tc 20    & 79.14          & 2.41      & 2.97         & 0.03 & 0.75   \\
    tc 50    & 192.64         & 6.92      & 2.84         & 0.04 & 0.76   \\
    tc 100   & 377.83         & 14.44     & 2.81         & 0.04 & 0.78    \\
    fib 5    & 95.28          & 4.92      & 2.83         & 0.20 & 0.89 \\
    fact 3   & 247.50         & 19.71     & 3.35         & 0.48 & 1.40 \\
    \end{tabular}
    \caption{Running times of shared \K/Minuska benchmarks.
    The meaning of the columns is as follows.
    For \K,
    ``\K parameters'' is the time required to generate proof parameters for a single execution,
    including time required argument parsing;
    ``\K cert-gen'' is the time of certificate generation from those parameters;
    and ``\K cert-check'' is the time required for checking those certificates.
    For Minuska, ``M compute'' is the time to evaluate the benchmark using Coq's Compute mechanism,
    measured inside Coq using its Time command;
    and ``M coqc'' is the the time required for Coq to process a file with the benchmark,
    as measured by the GNU \texttt{time} utility.
    }
    \label{fig:benchmarksshared}
    \end{figure}

To benchmark \K, we use the infrastructure of \cite{ChenLTR20},
which we modified so as to measure also the time to
execute a given program on a given input while generating
so-called ``proof parameters''.
We measure the performance on Intel Core I7 10th Gen
(model name Intel(R) Core(TM) i7-10510U CPU @ 1.80GHz)
in a virtual machine limited to 12 GiB of RAM,
running Ubuntu 20.04 (for which the available artifact is intended).
To benchmark Minuska, we use the same hardware running virtualized
Ubuntu 23.10, with the same memory limit.
We use a newer OS for Minuska because it depends on Coq 8.19, which was not available in Ubuntu 20.04.

Our benchmarking results are in \Cref{fig:benchmarksshared};
the ``two-counters'' benchmark is abbreviated as ``tc''.
We argue that the most interesting measure of \K's
interpreter is the \emph{sum} of all three measured times:
to ensure that one gets a valid result, one needs to first run the interpreter
while generating proof parameters, then to generate a certificate,
and finally to check it. Unforunately, the times required for generating
proof parameters are so high because of the rather naive approach
applied: the \K tool is called repeatedly as many times as there are
computational steps needed for the execution.
We note that the ``\K parameters'' column is not reported in~\cite{ChenLTR20}.

Regarding Minuska, the difference between the 'Compute' time and the 'coqc' time
is almost constant, over $0.6$ seconds; we intepret this difference
as the time required to load Coq and the Minuska library.
Minuska's certified (but naive) interpreter outperforms
\K's certifying interpreter by a large margin.
Perhaps surprisingly, it is faster to start Coq and perform
a verified computation inside than to verify the proof certificate
generated by \K for an equivalent computation.

\subsubsection{Discussion}
In our opinion, one reason why Minuska's verified
but naive interpreter outperforms \K's certifying interpreter
is the size of the generated proof objects.
This is likely due to the choice of \emph{matching logic}
as the logic of the proof certificates:
matching logic is rather low-level and its Hilbert-style proof system
makes it hard to come up with compact proofs without resorting to metalevel reasoning.
Even a very fast proof checker can hardly compensate for the size of \K's proof certificates,
which is often in hundreds of megabytes of text data.

\subsection{IMP semantics  in Minuska}

In addition, we use the full definition of a simple imperative language IMP,
adapted from \cite{ChenLTR20},
to demonstrate that Minuska can do more than the version of \K of \cite{ChenLTR20}:
to interpret a program in a programming language,
and to do so with a reasonable performance.
According to \Cref{fig:bench-imp-count2n}, one iteration of the main loop
of \texttt{imp-count-to} takes about half a second when interpreted by Coq,
and about 10 milliseconds when executing natively.
The example uses built-in integers, as well as built-in
maps; the source code is available in \ghref{languages/imp/imp.m}{imp.m}
and \inextended{in~\Cref{fig:imp-in-minuska}}{in~\cite{MinuskaExt}}.

\begin{figure}
\begin{minipage}[t]{.45\textwidth}
\hrulei    
\vspace{4pt}
\begin{verbatim}
/* imp-count-to-n */
n := $arg ;
sum := 0 ;
while (1 <= n) {
    sum := sum + n ;
    n := n + (-1) ;
};
sum
\end{verbatim}
\end{minipage}\hspace{0.02\linewidth}
\begin{minipage}[t]{.45\textwidth}
\hrulei
    \begin{tabular}{c|ccc}
        \$arg & Compute [s] & coqc [s] & native \\ \hline
        1  & 0.66 & 1.49 & 0.02 \\
        2  & 1.15 & 1.86 & 0.02 \\
        3  & 1.57 & 2.28 & 0.03 \\
        4  & 2.03 & 2.78 & 0.04 \\
        5  & 2.45 & 3.13 & 0.04 \\
        6  & 2.95 & 3.62 & 0.05 \\
        7  & 3.41 & 4.16 & 0.06 \\
    \end{tabular}
\end{minipage}
    \caption{Running times of a program for computing $\Sigma_{i=i}^{n} i$
        in a simple imperative language IMP with semantics defined in Minuska.
        'Compute' means the time of the 'Compute' command within Coq,
        'coqc' means the total time of the `coqc' command on a file containing the Compute command,
        and `native' means the time of the native code generated by the OCaml compiler for the interpreter extracted
        from Coq into OCaml.
    }
    \label{fig:bench-imp-count2n}
\end{figure}

%

\subsection{Other benchmarks}

To get a fuller understanding of Minuska's performance, we have used an additional
set of benchmarks based on computing Fibonacci numbers and factorials. The results are in \Cref{fig:benchminuska}.
Here ``unary-fib'' is the same Fibonacci system
as in \Cref{fig:benchmarksshared},
and ``unary-fact'' is the same code for computing factorials,
both measured with growing input parameter.
The ``unary'' prefix means that a naive unary encoding of natural numbers
is used, with addition of two numbers being computed in
a linear number of steps in the value of the first addend.
Also, in both cases a naive recursive algorithm is used.
The ``native-fib'' collection of benchmarks exercise a different implementation
of the Fibonacci sequence: the code is more imperative in style,
always remembering the last two values of the sequence
instead of recursing to compute them; moreover, it uses Minuska's
built-in integers implemented using Coq's binary integers.

\begin{figure}
        \begin{center}
            \begin{minipage}[t]{.45\textwidth}
            
        \begin{tabular}{c|cc}
            benchmark & Compute [s] & coqc [s] \\ \hline
            native-fib 1  & 0.026 & 0.74 \\
            native-fib 11  & 0.042 & 0.80 \\\
            native-fib 21  & 0.058 & 0.79 \\
            native-fib 31  & 0.068 & 0.78 \\
            native-fib 41  & 0.079 & 0.79 \\
            native-fib 51  & 0.11 & 0.79 \\
            native-fib 61  & 0.125 & 0.81 \\
            native-fib 71  & 0.141 & 0.82 \\
            native-fib 81  & 0.152 & 0.82 \\
            native-fib 91  & 0.161 & 0.84 \\
            native-fib 101  & 0.172 & 0.86 \\
            \end{tabular}
        
            \end{minipage}
            \hspace{0.02\linewidth}
            \begin{minipage}[t]{.45\textwidth}
            
\begin{tabular}{c|cc}
    benchmark & Compute [s] & coqc [s] \\ \hline
    unary-fib 1  & 0.094 & 0.77 \\
    unary-fib 2  & 0.102 & 0.78 \\
    unary-fib 3  & 0.12 & 0.80 \\
    unary-fib 4  & 0.151 & 0.84 \\
    unary-fib 5  & 0.199 & 0.89 \\
    unary-fib 6  & 0.285 & 0.98 \\
    unary-fib 7  & 0.428 & 1.21 \\
    unary-fib 8  & 0.818 & 1.53 \\
    unary-fib 9  & 1.233 & 2.02 \\
    unary-fib 10  & 2.171 & 2.91 \\
    unary-fib 11  & 3.313 & 4.04 \\
    unary-fact 1  & 0.159 & 0.90 \\
    unary-fact 2  & 0.222 & 0.95 \\
    unary-fact 3  & 0.476 & 1.40 \\
    unary-fact 4  & 1.107 & 1.85 \\
    unary-fact 5  & 3.108 & 3.86 \\
    unary-fact 6  & 12.785 & 13.60 \\
    \end{tabular}
        \end{minipage}
    \end{center}
    \caption{Other benchmarks of Minuska}
    \label{fig:benchminuska}
\end{figure}

\subsection{Development effort}

The part of Minuska reported in this paper
cost about 4 person-months to develop
and consists of approximately 16kLOC of Coq code,
only approximately 700LOC of which are specifications (of MinusLang etc.), the
rest being implementation and proofs.


\section{Conclusion}

We have introduced Minuska: a formally-verified programming language framework
capable of generating interpreters from language definitions.
The generated interpreters, while using a naive pattern matching algorithm,
have better performance than the certifying interpreters of~\cite{ChenLTR20},
while also having the advantage of being formally verified to behave correctly with respect
to the given language definition.
However, more future work remains.

\paragraph{Future developments}
When it comes to semantic frameworks, having a verified semantic-based interpreter is a good start,
but one can go further.
\begin{enumerate}
\item As has been demonstrated by \cite{StefanescuPYLR16}, 
semantic frameworks can implement symbolic execution and deductive verification facilities.
\item Although our pattern-matching-based interpreter is more performant than the certifying competition,
it is still rather slow. We believe we could reuse some existing work for performant pattern matching
that has been done in the context of compilers - for example, \cite{CompilingPM}.
The \K tool already has an unverified implementation of some of these ideas.
\item In both Minuska and \K,
the constructs for \emph{evaluation contexts}, \emph{strictness}, and \emph{order of evaluation}
of subexpression have no formal semantics, but are implemented only as a syntactic sugar
on top of ordinary rewriting rules. It would be better if these had a formal semantics,
since it would make reasoning about a language semantics more high-level.
\item It would also be interesting to model some of the larger languages defined in \K in Minuska as well.
\end{enumerate}

\section{Acknowledgements}
    We are grateful to Traian Florin Șerbănuță for his input to discussions around Minuska,
    and to anonymous reviewers for their feedback on a previous version of this paper.    

\bibliographystyle{splncs04}
\inextended{
  \bibliography{bibliography}
}{
  \bibliography{bibliography,extendedbibliography}
}
\inextended{\appendix\newpage
\section{Formal definitions}\label{sec:formaldefs}

\subsection{Language \minusS}
For reasoning about static program configurations, Minuska uses
a language named \minusS.
The key concepts here are \emph{ground terms} (representing program configurations),
\emph{symbolic terms}
(representing left sides of rewriting rules), 
\emph{expression terms} (representing right sides of rewriting rules),
\emph{constraints}, and
\emph{satisfaction relations} between these.

\subsubsection{Syntax}
Let us fix a type $\mathit{Sym}$ of \emph{symbols},
a type $\mathit{Var}$ of \emph{variables},
a type $\mathit{B}$ of \emph{builtin values};
a type $\mathit{F}$ of \emph{builtin function names};
and a function $\mathit{ar} : F \to \mathbb{N}_0$.

Given a type $X$, the type $\mathcal{T}(X)$ of \emph{terms over $X$} is defined inductively to be inhabited by
\begin{itemize}
    \item $x$, where $x \in X$; and
    \item $s [t_n,\ldots,t_n]$, where $s \in \mathit{Sym}$
          and $t_i \in \mathcal{T}(X)$ (for every $i \in \{ 1, \ldots, n \}$).
\end{itemize}
Then, the type $\mathcal{T}_g$ of \emph{ground terms} is defined to be exactly $\mathcal{T}(\mathit{B})$;
the type $\mathcal{T}_s$ of \emph{symbolic terms} is defined to be exactly $\mathcal{T}(\mathit{B} + \mathit{Var})$,
where $\mathit{B} + \mathit{Var}$ represents the sum type of $\mathit{B}$ and $\mathit{Var}$;
and the type $\mathcal{T}_e$ of \emph{expression terms} is defined to be exactly $\mathcal{T}(E)$,
where $\mathit{E}$ is the set of \emph{expressions} defined inductively as follows:
\begin{itemize}
    \item a ground term $g \in \mathcal{T}_g$ is an expression;
    \item a variable $v \in \mathit{Var}$ is an expression;
    \item $f(e_1,\ldots,e_n)$, where $f \in \mathit{F}$ and $e_i$ is an expression (for every $i \in \{ 1, \ldots, n \}$),
          is an expression. 
\end{itemize}
We denote $\mathit{CS} = E \times E$ to be the set of \emph{side conditions}.
For any symbol $s$, we take the freedom to write $s$ instead of $s[]$
in any context requiring (ground, symbolic, or expression) terms.
We also define the substitution function $\mathit{subst} : \mathcal{T}_s \times \mathit{Var} \times \mathcal{T}_s \to \mathcal{T}_s$
in the usual way - that is, inductively, by
\begin{itemize}
    \item $\mathit{subst}(x, x, t) = t$;
    \item $\mathit{subst}(y, x, t) = y$ if $y \in \mathit{Var}$ and $x \neq y$; and
    \item $\mathit{subst}(s[t_1,\ldots,t_n], x, t) = s[\mathit{subst}(t_1,x,t),\ldots,\mathit{subst}(t_n,x,t)]$,
\end{itemize}
and use the notation $t[t^\prime/x]$ for $\mathit{subst}(t, x, t^\prime)$.
We let $\mathit{FV}(t)$ to denote the set of (free) variables of a symbolic term,
or an expression, or a set or list thereof.

\subsubsection{Semantics}

A \emph{static model} $\Sigma$ consists of a total function
$\Sigma_f : (\mathcal{T}_g)^{ar(f)} \to \mathcal{T}_g$
that is the interpretation of the function name $f \in F$.
Let us fix some static model $\Sigma$. Let $\mathit{Val}$ denote the type of \emph{valuations}
$\rho : \mathit{Var} \rightharpoonup \mathcal{T}_g$ - that is, the type of partial functions from variables to ground terms.
We define a relation $\vDash_E \subseteq \mathit{Val} \times \mathcal{T}_g \times \mathit{E}$,
representing satisfaction $\rho,g \vDash_E e$ between ground terms and expressions,
inductively:
\begin{itemize}
\item $\rho,g \vDash_E g$ for any ground term $g$;
\item $\rho,g \vDash_E v$ for any variable $v$ and ground term $g$ such that $\rho(v) = g$; and
\item $\rho,g \vDash_E f(e_1,\ldots,e_n)$ if there are some ground terms $g_1,\ldots,g_n$ such that 
    $\rho,g_i \vDash_E e_i$ (for every $i \in \{ 1,\ldots, n \}$) and $\Sigma_f(g_1,\ldots,g_n) = g$.
\end{itemize}
Furthermore, we define the relation $\vDash_O \subseteq \mathit{Val} \times \mathcal{T}_g \times (\mathit{B} + \mathit{Var})$ as follows:
\begin{itemize}
\item $\rho,b \vDash_O b$ if $b$ is a builtin value; and
\item $\rho,g \vDash_O v$ if $v$ is a variable such that $\rho(v) = g$.
\end{itemize}

Now we can define the relation $\vDash_s \subseteq \mathit{Val} \times \mathcal{T}_g \times \mathcal{T}_s$ inductively:
\begin{itemize}
\item $\rho,g \vDash_s \mathit{bv}$ if $\rho,g \vDash_O \mathit{bv}$ for any $\mathit{bv} : \mathit{B} + \mathit{Var}$;
\item $\rho,s[g_1,\ldots,g_2] \vDash_s s[t_1,\ldots,t_n]$
    where $g_i : \mathcal{T}_g$ and $\rho,g_i \vDash_s t_i$
    (for every $i \in \{ 1, \ldots, n \}$).
\end{itemize}
The relation $\vDash_e \subseteq \mathit{Val} \times \mathcal{T}_g \times \mathcal{T}_e$ is defined
by a similar schema with different base case:
\begin{itemize}
\item $\rho,g \vDash_e e$ if $\rho,g \vDash_E e$ for any $e : \mathit{E}$;
\item $\rho,s[g_1,\ldots,g_2] \vDash_s s[t_1,\ldots,t_n]$
    where $g_i : \mathcal{T}_e$ and $\rho,g_i \vDash_s t_i$
    (for every $i \in \{ 1, \ldots, n \}$).
\end{itemize}
A side condition $(e_1, e_2)$ \emph{holds} in valuation $\rho$, written $\rho \vDash e_1 = e_2$, if for any $t_1,t_2 \in \mathcal{T}_g$,
if $\rho,t_1 \vDash e_1$ and $\rho,t_2 \vDash e_2$ then $t_1 = t_2$.
A list $\mathit{cs}$ of side conditions \emph{holds} in valuation $\rho$, written $\rho \vDash \mathit{cs}$, if $\rho \vDash c$ for every $c \in \mathit{cs}$.
When the types of the operands are clear from the context, we take the freedom to omit the subscripts
and write simply $\vDash$ for any of $\vDash_e$, $\vDash_s$, $\vDash_E$ and $\vDash_O$.

\subsection{Language \minusLang}

Language \minusLang builds on top of \minusS and adds
\emph{actions}, \emph{rewriting rules},
and \emph{loading} and \emph{unloading} functions.
As an example, let us consider the rewriting rule
\begin{equation*}
    \ruleKeyword \texttt{plus} [ X, Y ] \Rightarrow \mathit{plusZ}(X, Y) \, .
\end{equation*}
which ensures, for example, that the ground terms $\texttt{plus}[3,4]$ and $7$
are in the one-step rewriting relation: $\texttt{plus}[3,4] \rightsquigarrow 7$.

\subsubsection{Syntax}
Let us fix
a type $\mathit{Act}$ of \emph{actions}.
We say that a $w \in \mathit{Act}^*$ is an \emph{action word}.
A \emph{rewriting rule} is a quadruple $(l, r, \mathit{cs}, a)$,
where $l \in \mathcal{T}_s$, $r \in \mathcal{T}_e$, $\mathit{cs}$ is a list of side conditions
of the shape $(e_1, e_2)$ where $e_1,e_2 \in \mathit{E}$, and $a \in \mathit{Act}$,
such that $\mathit{FV}(r) \cup \mathit{FV}(\mathit{cs}) \subseteq \mathit{FV}(l)$;
and a \emph{rewriting theory} is a finite set of rewriting rules.

\subsubsection{Semantics}

Given a rewriting theory $\Gamma$, let $\minusRelPrime{\Gamma}{\_} \subseteq \mathcal{T}_g \times \mathit{Act} \times \mathcal{T}_g$
be the ternary relation between ground terms and actions defined by:
$g_1 \minusRelPrime{\Gamma}{a} g_2$ iff there exists some rule $(l, r, \mathit{cs}, a) \in \Gamma$ and a valuation $\rho$
such that $\rho,g_1 \vDash l$, $\rho,g_2 \vDash r$, and $\rho \vDash \mathit{cs}$.
We extend $\minusRelPrime{\Gamma}{\_}$ to action words, by defining 
$g_1 \minusRelPrime{\Gamma}{a_1 \cdot...\cdot a_n} g_{n+1}$ to hold iff there exist some
$g_2,\ldots,g_{n} \in \mathcal{T}_g$ such that $g_i \minusRelPrime{\Gamma}{a_i} g_{i+1}$ (for $i \in \{ 1, \ldots, n \} $).

The above definitions are mechanized in Coq,
in \ghref{minuska/theories/spec.v}{spec.v}.
The well-formedness condition on variables of rewriting rule
is mechanized \ghref{minuska/theories/spec_interpreter.v\#L64-L82}{separately}.
The interpreter and its correctness theorem are mechanized in\ghref{minuska/theories/naive\_interpreter.v}{naive\_interpreter.v}.

\subsection{Interpreter correctness}\label{appendix:intepreterCorrectness}
\begin{algorithm}
  \caption{TryMatch}\label{alg:trymatch}
\begin{algorithmic}
  \Function{tryMatch}{pattern $\varphi$, ground term $g$}
      \If{$\varphi$ is variable $x$}
        \State \Return $\{ (x, \varphi) \}$
      \EndIf
      \If{$\varphi$ is ground term $g^\prime$}
        \If{$g^\prime = g$}
          \State \Return $\emptyset$
        \Else
          \State \Return None
        \EndIf
      \EndIf
      \If{$g$ is a builtin value}
        \State \Return None
      \EndIf
      \State $s^\prime(\varphi_1,...,\varphi_n) \gets \varphi$
      \State $s(t_1,\ldots,t_n) \gets g$
      \If{$s \not = s^\prime$}
        \State \Return None
      \EndIf
      \State $v_1,\ldots,v_n \gets \Call{tryMatch}{\varphi_1, t_1}, \ldots, \Call{tryMatch}{\varphi_n, t_n}$
      \If{Some $v_i$ is None}
        \State \Return None
      \EndIf
      \If{$v_i(x)$ and $v_j(x)$ are defined but distinct for some $x$ and $i \not = j$}
        \State \Return None
      \EndIf
      \State \Return $v_1 \cup \ldots \cup v_n$
  \EndFunction
\end{algorithmic}
\end{algorithm}

\begin{lemma}[\textsc{tryMatch} correct]\label{lem:tryMatch}
  For any ground term $g$, any symbolic term $\varphi$, and any valuation $\rho$,
  \begin{enumerate}
    \item if $\textsc{tryMatch}(g, \varphi)$ returns $\rho$, then $\rho, g \vDash_s \varphi$; and
    \item if $\rho, g \vDash_s \varphi$, then $\textsc{tryMatch}(g, \varphi)$ returns some valuation $\rho^\prime$
      such that
      \begin{itemize}
        \item $\rho^\prime$ is defined exactly on variables of $\varphi$; and
        \item $\rho$ extends $\rho^\prime$.
      \end{itemize}
  \end{enumerate}
  We call the first property \emph{soundness} and the second one \emph{completeness}.
\end{lemma}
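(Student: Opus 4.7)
The plan is to prove both parts simultaneously by structural induction on the symbolic pattern $\varphi$. The base cases correspond to the two leaf forms: a variable $x \in \mathit{Var}$, where $\textsc{tryMatch}$ always returns $\{(x \mapsto g)\}$, and a builtin value $b \in \mathit{B}$, where $\textsc{tryMatch}$ returns $\emptyset$ exactly when $g = b$ and $\textbf{None}$ otherwise. In both cases, soundness is immediate from the definition of $\vDash_O$ (hence of $\vDash_s$), and completeness is equally direct since $\vDash_s$ on leaves either pins down $\rho(x)$ (for variable leaves, forcing $\rho$ to extend the singleton) or forces $g = b$ (for builtin leaves, making $\emptyset$ the trivially extended valuation with $\mathit{FV}(b) = \emptyset$).

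For the inductive case $\varphi = s'[\varphi_1,\ldots,\varphi_n]$, I would first observe that, by the definition of $\vDash_s$ on non-leaf patterns, $\rho, g \vDash_s \varphi$ forces $g$ to be non-builtin and to have the shape $s[t_1,\ldots,t_n]$ with $s = s'$ and $\rho, t_i \vDash_s \varphi_i$ for each $i$. Symmetrically, $\textsc{tryMatch}(\varphi, g)$ returns a valuation only if the same structural constraints on $g$ hold and every recursive call succeeds, yielding sub-valuations $v_i$ whose union is returned. Soundness then follows by applying the inductive hypothesis to each recursive call, which gives $v_i, t_i \vDash_s \varphi_i$, and then invoking an auxiliary monotonicity lemma (proved separately by a routine induction on $\varphi$) saying that $\vDash_s$ is preserved under extension of the valuation; this lifts each $v_i, t_i \vDash_s \varphi_i$ to $(v_1 \cup \ldots \cup v_n), t_i \vDash_s \varphi_i$, and the clause for non-leaf patterns in the definition of $\vDash_s$ delivers the conclusion.

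For completeness in the inductive case, the inductive hypothesis provides, for each $i$, a valuation $v_i$ returned by $\textsc{tryMatch}(\varphi_i, t_i)$ that is defined exactly on $\mathit{FV}(\varphi_i)$ and is extended by $\rho$. Because every $v_i$ is a restriction of the same $\rho$, the consistency check in $\textsc{tryMatch}$ (rejecting whenever two $v_i$ disagree on a shared variable) is guaranteed to succeed, so $\textsc{tryMatch}$ returns $\rho' = v_1 \cup \ldots \cup v_n$. The support of $\rho'$ is $\bigcup_i \mathit{FV}(\varphi_i) = \mathit{FV}(\varphi)$, and $\rho$ extends $\rho'$ since it extends each $v_i$.

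The step I expect to be delicate is the merging of the sub-valuations: one must check both (a) that, in the soundness direction, merging preserves satisfaction of each sub-pattern (this is where the monotonicity lemma is essential, and where Coq mechanization tends to require careful handling of partial functions and domain restrictions), and (b) that, in the completeness direction, the minimality claim---$\rho'$ is defined \emph{exactly} on $\mathit{FV}(\varphi)$---is maintained despite the union, which requires that each recursive call also returns a minimally-supported valuation. Everything else is a straightforward case analysis driven by the branches of \textsc{tryMatch}.
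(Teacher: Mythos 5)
Your proof is correct and takes essentially the same route as the paper's, which disposes of the lemma in one sentence (``straightforward but technical induction on the size of $\varphi$ and case analysis of both $\varphi$ and $g$''). Your write-up is in fact more detailed than the paper's sketch: the monotonicity lemma you isolate for soundness ($\vDash_s$ is preserved under extension of the valuation, needed to lift $v_i, t_i \vDash_s \varphi_i$ to the merged valuation $v_1 \cup \ldots \cup v_n$) and the observation that the consistency check cannot fail in the completeness direction because every $v_i$ is a restriction of the same $\rho$ are precisely the points the paper hides under the word ``technical.''
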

\begin{proof}
  Both properties are proved by straightforward but technical induction on the size of $\varphi$
  and case analysis of both $\varphi$ and $g$.
\end{proof}

\begin{lemma}[\textsc{evaluateCondition}]\label{lem:evaluateCondition}
  For any side condition $c$, $\textsc{evaluateCondition}(\rho, c)$ returns True if $\rho \vDash c$
  and False otherwise.
\end{lemma}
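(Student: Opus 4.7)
The plan is to reduce the lemma to the correctness of an auxiliary expression-evaluation function. Concretely, I would introduce a partial function $\textsc{evaluateExpression} : \mathit{Val} \times E \rightharpoonup \mathcal{T}_g$, defined by structural recursion on $e$: it looks up variables in $\rho$ (failing if undefined), returns ground terms unchanged, and for applications $f(e_1,\ldots,e_n)$ recursively evaluates each argument and applies $\Sigma_f$ if all succeed. Then \textsc{evaluateCondition}$(\rho,(e_1,e_2))$ would be implemented as: evaluate both $e_1$ and $e_2$, return True iff both succeed and yield equal ground terms.

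The first key step is to prove the correctness of \textsc{evaluateExpression} with respect to $\vDash_E$: by induction on $e$, show that (i) if the function returns $g$ then $\rho, g \vDash_E e$ (soundness), and (ii) if $\rho, g \vDash_E e$ then the function returns exactly $g$ (completeness plus functionality). The variable and ground-term base cases are immediate from the definitions of $\vDash_E$ and $\vDash_O$; the application case uses the totality of $\Sigma_f$ together with the inductive hypothesis applied to each argument. A useful corollary falling out of this step is that $\vDash_E$ is \emph{functional}: for a fixed $\rho$ and $e$, there is at most one $g$ with $\rho, g \vDash_E e$.

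Given functionality, the two directions of the lemma for side conditions $(e_1,e_2)$ fall out cleanly. For the ``if'' direction, suppose $\rho \vDash (e_1,e_2)$: if either \textsc{evaluateExpression} fails, we return False, which is consistent with the vacuous satisfaction only when no ground term satisfies the expression—this is exactly the case that needs care (see below). If both evaluations succeed yielding $t_1,t_2$, then by soundness $\rho,t_i \vDash_E e_i$, and the definition of $\vDash$ for side conditions forces $t_1 = t_2$, so we return True. For the ``only if'' direction, if the function returns True then both expressions evaluated to the same $g$, so by soundness $\rho, g \vDash_E e_1$ and $\rho, g \vDash_E e_2$; combined with functionality this yields $\rho \vDash (e_1,e_2)$.

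The main obstacle will be the subtle vacuous case in the semantics of side conditions: $\rho \vDash (e_1,e_2)$ holds trivially when no ground term satisfies $e_1$ (or $e_2$), yet our function would return False in that scenario, breaking the stated equivalence. The cleanest fix is to strengthen the lemma (or reinterpret its intent in context) with a well-formedness hypothesis, namely that $\rho$ is defined on $\mathit{FV}(e_1) \cup \mathit{FV}(e_2)$ and that all function symbols appearing in $e_1,e_2$ evaluate successfully. This hypothesis is discharged at the use site: when \textsc{evaluateCondition} is called inside \textsc{naiveSelect}, the valuation $\rho$ comes from \textsc{tryMatch} applied to the rule's left-hand side, and the well-formedness condition $\mathit{FV}(cs) \subseteq \mathit{FV}(l)$ on rewriting rules guarantees, via \Cref{lem:tryMatch}, that $\rho$ is defined on every variable appearing in $cs$. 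Under this hypothesis, a straightforward induction shows \textsc{evaluateExpression} always succeeds, and the equivalence goes through without the vacuous edge case.
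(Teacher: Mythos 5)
Your proposal is correct and follows the same basic strategy the paper intends --- the paper's entire proof is the single sentence ``It is easy to write \textsc{evaluateCondition} such that the lemma holds by induction on size of the condition,'' and your argument is exactly the induction it is gesturing at, carried out via an auxiliary \textsc{evaluateExpression} whose soundness, completeness, and functionality with respect to $\vDash_E$ you establish first. What you add beyond the paper is genuinely valuable: you correctly observe that the lemma as literally stated is false in the vacuous case. The paper defines $\rho \vDash e_1 = e_2$ as ``for any $t_1, t_2$, if $\rho, t_1 \vDash e_1$ and $\rho, t_2 \vDash e_2$ then $t_1 = t_2$,'' which holds vacuously when $\rho$ is undefined on some variable of $e_1$, yet any reasonable implementation of \textsc{evaluateCondition} returns False there. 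Your fix --- adding a definedness hypothesis on $\mathit{FV}(e_1) \cup \mathit{FV}(e_2)$ and discharging it at the call site via the rule well-formedness condition $\mathit{FV}(\mathit{cs}) \subseteq \mathit{FV}(l)$ together with the completeness clause of \Cref{lem:tryMatch} (which guarantees the valuation returned by \textsc{tryMatch} is defined on all variables of $l$) --- is exactly how the mechanization must handle this, and it is the kind of detail the paper's one-line proof sweeps under the rug. The only caveat is that your added hypothesis also requires ``all function symbols evaluate successfully,'' which is automatic here since the paper's static model interprets each $f$ as a \emph{total} function $\Sigma_f : (\mathcal{T}_g)^{\mathit{ar}(f)} \to \mathcal{T}_g$; you could drop that part of the hypothesis.
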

\begin{proof}
It is easy to write \textsc{evaluateCondition} such that the lemma holds by induction on size of the condition.
\end{proof}

\begin{lemma}[\textsc{evaluate} correct]\label{lem:evaluateExpression}
  For any ground term $g$, an expression term $e$, and any valuation $\rho$,
  $\textsc{evaluate}(\rho, e)$ returns $g$ if and only if $\rho, g \vDash_e e$.
\end{lemma}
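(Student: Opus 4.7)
The plan is to proceed by structural induction on the expression term $e \in \mathcal{T}_e$, mirroring the inductive definition of $\vDash_e$, and to factor out the base case (atomic expressions from $E$) as an auxiliary sub-lemma. The \textsc{evaluate} function itself should recurse in the obvious way: on $s[e_1, \ldots, e_n]$ it computes $g_i \gets \textsc{evaluate}(\rho, e_i)$ for each $i$, failing if any sub-call fails, and returns $s[g_1, \ldots, g_n]$; on an atomic expression from $E$ it delegates to a helper \textsc{evaluateExpr} that handles ground constants, variable lookups in $\rho$, and function applications via the static model interpretations $\Sigma_f$.

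The auxiliary sub-lemma to establish first is: \textsc{evaluateExpr}$(\rho, e) = g$ iff $\rho, g \vDash_E e$, proved by induction on the size of $e \in E$. The ground-term case is immediate; the variable case follows from the defining clause $\rho,g \vDash_E v \iff \rho(v) = g$ (with both sides failing when $v \notin \mathrm{dom}(\rho)$); and for $f(e_1, \ldots, e_n)$ the inductive hypotheses give a bijective correspondence between successful sub-evaluations and witnesses $g_i$ satisfying $\rho, g_i \vDash_E e_i$, after which totality and functionality of $\Sigma_f$ close the goal in both directions.

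With the sub-lemma in hand, the main induction on $e \in \mathcal{T}_e$ has two cases. If $e$ is an atomic expression, the sub-lemma applies directly since $\rho,g \vDash_e e$ reduces to $\rho,g \vDash_E e$ by definition. If $e = s[e_1, \ldots, e_n]$, we argue both directions: for soundness, if \textsc{evaluate} returns $g = s[g_1, \ldots, g_n]$ then each $\textsc{evaluate}(\rho, e_i) = g_i$, so the inductive hypothesis gives $\rho, g_i \vDash_e e_i$, and the constructor clause of $\vDash_e$ yields $\rho, g \vDash_e e$; for completeness, if $\rho, g \vDash_e e$ then $g$ must have shape $s[g_1,\ldots,g_n]$ with $\rho, g_i \vDash_e e_i$, and by the inductive hypothesis each sub-call succeeds returning exactly $g_i$, so \textsc{evaluate} assembles $g$ correctly.

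The main obstacle I anticipate is not any single case but the bookkeeping required to handle partiality uniformly: \textsc{evaluate} may fail whenever a variable is undefined in $\rho$ or whenever sub-evaluations disagree with the structural shape of the ground term, while $\vDash_e$ is silently unsatisfied in those same situations. Keeping this alignment precise, especially in the function-application sub-case of the sub-lemma (where one must extract the existentially quantified witnesses $g_1, \ldots, g_n$ from $\vDash_E$ and show they are exactly the values produced by the recursive calls), is where the proof becomes technical, though still routine given that $\Sigma_f$ is total and deterministic.
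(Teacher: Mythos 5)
Your proposal is correct and takes essentially the same approach as the paper, which simply states that \textsc{evaluate} can be written so that the lemma holds by induction on the size of the expression; your version spells out the structural induction, the auxiliary sub-lemma for the $\vDash_E$ base case, and the partiality bookkeeping that the paper leaves implicit.
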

\begin{proof}
  It is easy to write \textsc{evaluate} such that the lemma holds by induction on size of the expression.
\end{proof}

\begin{proof}(of \Cref{thm:stepCorrect})
  Specifically, for soundness, assume $\textsc{step}(\Gamma, g)$ returns $g^\prime$.
  That can happen only if $\textsc{naiveSelect}(\Gamma, g)$ returns
  some $((l \Rightarrow r\ \texttt{if}\ c), \rho)$
  such that $\textsc{Evaluate}(\rho, r)$ returns $g^\prime$.
  Because $\textsc{naiveSelect}$ is performing only a simple linear search in $\Gamma$,
  it follows that $(l \Rightarrow r\ \texttt{if}\ c) \in \Gamma$, $\textsc{tryMatch}(l, g)$
  had to return $\rho$, $\textsc{evaluate}(\rho, r)$ had to return $g^\prime$, and $\textsc{evalauteCondition}(\rho, c)$ had to return True.
  But by correctness of these components, we have $\rho, l \vDash g$, $\rho, r \vDash g^\prime$,
  and $\rho \vDash c$. But these are exactly the conditions relating $g$ and $g^\prime$ in the rewriting relation.

  Completeness is almost the reverse of the above argument, with two caveats.
  \begin{enumerate}
    \item One has to take care of the fact that the semantic relation between
    $g$ and $g^\prime$ gives us only \emph{some} valuation, not necessarily a minimal one (that would be returned by \textsc{tryMatch}).
    That is why soundness and completenss of \textsc{tryMatch} are not symmetrical.
    \item Another catch is that the interpreter returns only \emph{some} successor of $g$, not necessarily $g^\prime$, because the language
    definition might be nondeterministic. That is the reason why the definition of completeness of the interpreter is formulated
    differently than its soundness.
  \end{enumerate}

\end{proof}

\newpage
\section{Language definition of IMP in Minuska}\label{sec:imp-in-minuska}

Here, in \Cref{fig:imp-in-minuska}, we present a full language definition of a simple imperative programming language, IMP, in 
the concrete syntax of Minuska, as accepted by the \texttt{minuska} script.
\begin{figure}[h]
\begin{verbatim}
@frames: [simple(CODE): c[builtin.cseq [CODE,REST], STATE]];
@value(X): (bool.or(z.is(X), bool.or(bool.is(X),term.same_symbol(X, [unitValue[]])))) ;
@context(HOLE): c[HOLE, STATE];
@strictness: [ neg of_arity 1 in [0],
  plus of_arity 2 in [0,1], minus of_arity 2 in [0,1],
  assign of_arity 2 in [1], seq of_arity 2 in [0], ite of_arity 3 in [0],
  eq of_arity 2 in [0,1], le of_arity 2 in [0,1], lt of_arity 2 in [0,1]
];
@rule [init]: builtin.init[X]
  => c[builtin.cseq[X, builtin.empty_cseq[]], map.empty()] where bool.true();
@rule/simple [aexpr.plus]: plus[X,Y] => z.plus(X, Y) where bool.and(z.is(X), z.is(Y)) ;
@rule/simple [aexpr.minus]: minus[X,Y] => z.minus(X, Y) where bool.and(z.is(X), z.is(Y)) ;
@rule [var.assign]: c[builtin.cseq[assign[X,V],REST], STATE]
=> c[builtin.cseq[unitValue[], REST], map.update(STATE, X, V)]
    where bool.and(term.same_symbol(X, [var[]]), z.is(V)) ;
@rule [var.lookup]: c[builtin.cseq[X, REST], STATE]
=> c[builtin.cseq[map.lookup(STATE, X), REST], STATE] where term.same_symbol(X, [var[]]) ;
@rule/simple [stmt.seq]: seq[unitValue[], X] => X where bool.true() ;
@rule/simple [bexpr.eq]: eq[X, Y] => z.eq(X, Y) where bool.and(z.is(X), z.is(Y)) ;
@rule/simple [bexpr.le]: le[X, Y] => z.le(X, Y) where bool.and(z.is(X), z.is(Y)) ;
@rule/simple [bexpr.lt]: lt[X, Y] => z.lt(X, Y) where bool.and(z.is(X), z.is(Y)) ;
@rule/simple [bexpr.neg]: not[X] => bool.neg(X) where bool.is(X) ;
@rule/simple [stmt.ite.true]:
  ite[B, X, Y] => X where bool.eq(B, bool.true()) ;
@rule/simple [stmt.ite.false]:
  ite[B, X, Y] => Y where bool.eq(B, bool.false()) ;
@rule/simple [while.unfold]:
  while[B, S] => ite[B, seq[S, while[B, S]], unitValue[]] where bool.true() ;
\end{verbatim}
\caption{A language definition of IMP in Minuska}
\label{fig:imp-in-minuska}
\end{figure}
}{}

\end{document}
